\let\epsilon\varepsilon
\let\phi\varphi
\def\X{{\cal X}}
\def\N{\mathbb N}
\def\S{\mathcal S}
\def\Sp{{\mathcal S}^+}
\def\E{\mathbb E}
 \newtheorem{theorem}{Theorem}
\theoremstyle{definition}
\title{Predicting the outcomes of  every process for which an asymptotically accurate stationary predictor exists is impossible}
 \author{\IEEEauthorblockN{Daniil Ryabko }
   \IEEEauthorblockA{INRIA Lille, France\\\texttt{\small daniil@ryabko.net}}	
\and
\IEEEauthorblockN{Boris Ryabko}
\IEEEauthorblockA{
Institute of Computational Technologies, \\ 
SB RAS, 
 Russia\\  \texttt{  boris@ryabko.net} }
}
\date{}
\begin{document}
\maketitle
\begin{abstract}
The problem of prediction consists in forecasting the conditional
distribution of the next outcome given the past.  Assume that  the source generating the data is such that there is a stationary
ergodic predictor whose error converges to zero (in a certain sense). The question is whether there is a universal 
 predictor for all such sources, that is, a predictor whose error goes to zero if any of the sources that have this property 
is chosen to generate the data. This question is answered in the negative, contrasting a number of previously 
established positive results concerning related but smaller sets of processes.
\end{abstract}

\section{Introduction}
The basic problem is predicting the conditional probability distribution 
$\hat \mu(\cdot|x_1,\dots,x_n)$ over the next outcome $x_{n+1}$ given a sequence of observations 
$x_1,\dots,x_n$ generated by an unknown time-series distribution $\mu$. Since $\hat \mu$ gives a conditional 
distribution for every $x_1,\dots,x_n$ it defines itself a time-series distribution. Thus, the source
of data and the predictor are objects of the same kind.   Traditionally, one assumes
$x_i$ to be independent and identically distributed, or that $\mu$ belongs to one of the  well-studied parametric
families. However, in applications involving hard-to-model  data sources such as stock market, human-written
texts or biological sources, it is often assumed, instead, that $\mu$ belongs to some large (nonparametric) family 
of time-series distributions. Examples of such families are the set of all finite-memory distributions or 
the set of all stationary distributions.   The hope is not that the unknown data source 
under study actually belongs to such a family~-- for example, that a human-written text obeys the finite-memory
assumption or that the stock market is stationary~-- but, rather, that the considered family of sources is good
enough for the forecasting task at hand. Such a ``hope,'' however, remains informal, since the theoretical
results concern the setting when the unknown source belongs to the family. 

Here we consider a formalization of the ``good enough for prediction'' setting proposed in \cite{Ryabko:11pq4+}. Specifically, 
we are asking  whether a predictor can be constructed which 
is asymptotically consistent (prediction error goes to 0) on any source for which a consistent predictor exists in a given family.  Thus, given a set $\S$ of distributions, 
we consider the set  $\Sp:=\{${\em of all distributions $\mu$ such that there exists
a distribution $\nu\in\S$ such that the prediction error of $\nu$ on sequences generated by $\mu$ goes to zero}$\}$.
We are asking whether there exists a predictor that is consistent on all distributions in $\Sp$. In this
work  the family $\S$ in question is the set of all stationary ergodic distributions, and the question is answered in the negative.
This negative result is rather tight; in particular, the same proof shows that the set $S$ 
can be replaced by the set of all Hidden Markov chains with a countable set of states (maintaining the negative result), 
while a consistent predictor exists if we only consider Hidden Markov chains with a finite set of states. 

\noindent{\bf Prior work.} A consistent predictor for the set $\S$ of all finite-alphabet stationary 
distributions has been constructed in \cite{BRyabko:88}. Here the prediction quality 
is with respect to Cesaro-averaged Kullback-Leibler (KL) divergence, which is required to 
converge to 0 either in  expectation or with probability~1. The same work shows that 
an analogous result is impossible to obtain without Cesaro averaging (the latter negative result was obtained independently in 
the unpublished thesis of Bailey \cite{Bailey:76}). The positive
result admits a number of generalizations and extensions, including those to continuous alphabets 
\cite{Algoet:92,Morvai:96,Morvai:97,Gyorfi:98,BRyabko:09,BRyabko:10}.

Prediction with {\em expert advice}  (see \cite{Cesa:06} for an overview) presents a different 
approach to the problem of prediction. Here one assumes that the data source to predict is an
arbitrary deterministic sequence, and makes no further assumptions on it. The goal is 
also different: rather than trying to make the prediction error decrease to 0 (which is impossible
in this setting), it is required  to predict as well as any expert from a given set. An important  difference is that in this setting
one does not give probability forecasts of the next outcome but just deterministic predictions, 
and the quality is measured (according to some loss function) with respect to the prediction of each 
expert.   The set of experts is usually small, most typically finite; the class of all i.i.d.\ predictors
 also has been considered \cite{Freund:03}. 
While this approach is very close to the one taken here, it does not allow one to look at predictors
(experts) and data source as objects of the same kind, thus making it difficult to formulate our question of 
interest. 

 A connection between the settings was made in the work \cite{Ryabko:11pq4+}, which formulates three problems.
The first one is the classical problem of constructing a predictor that is asymptotically consistent (its error goes to 0) 
if any process from an (arbitrary, given) set $\S$ is chosen to generate the data. The second is the one considered in this work:
asymptotically consistent prediction of  sequences generated by every source for which there is an asymptotically consistent predictor
 in a given set~$\S$.
The third setting removes the  ``asymptotically consistent'' part:
it requires constructing a predictor that predicts any source whatsoever as well as any predictor in a given set $\S$.
Thus, the latter formulation is  the worst-case analysis akin to expert advice (the only difference is that 
we still try to forecast probabilities, rather than individual outcomes).
Here all predictors and sources are just time-series distributions. The three problems  are naturally ordered
in difficulty: if the set $\S$ is the same in all the three problems, then any solution to the third problem is a solution to the second, and any solution to the second
is a solution to the first. For the set of all stationary processes, it is known since \cite{BRyabko:88} 
that the first problem admits a solution. It is shown in \cite{Ryabko:11pq4+} that the third problem (worst-case) does not,
but  the question of whether the second problem admits a solution for this set was left open; here   we answer it in the negative.

\section{Preliminaries}\label{s:pre}
Let $\X$ be a finite set. Since we are after a negative result, selecting $\X:=\{0,1\}$ is not 
a restriction, so we fix this choice.  The notation $x_{1..n}$ is used for $x_1,\dots,x_n$. 
 We consider  time-series distributions, that is, probability measures on $\Omega:=(\X^\infty,\mathcal B)$ where $\mathcal B$
is the sigma-field generated by the cylinder sets  $[x_{1..n}]$, $x_i\in\X, n\in\N$ 
and $[x_{1..n}]$ is the set of all infinite sequences that start with $x_{1..n}$.
We use  $\E_\mu$ for the
expectation with respect to a measure $\mu$.

For two measures $\mu$ and $\rho$ introduce the {\em expected cumulative Kullback-Leibler divergence (KL divergence)} as
\begin{multline*}\label{eq:akl} 
  d_n(\mu,\rho)\\:=\E_\mu
  \sum_{t=1}^n  \sum_{a\in\X} \mu(x_{t}=a|x_{1..t-1}) \log \frac{\mu(x_{t}=a|x_{1..t-1})}{\rho(x_{t}=a|x_{1..t-1})}.
\end{multline*}
In words, we take the expected (over data) cumulative (over time) KL divergence between $\mu$- and $\rho$-conditional (on the past data) 
probability distributions of the next outcome.
Define also 
$$
d(\mu,\rho):=\liminf_{n\to\infty} \frac{1}{n} d_n(\mu,\rho).
$$
We say that {\em $\rho$ predicts $\mu$} (in expected average KL divergence) if 
$
d(\mu,\rho)=0.
$
It is easy to see that 
$$
d_n(\mu,\rho)=\E_\mu\log \frac{\mu(x_{1..n})}{\rho(x_{1..n})},
$$
which makes expected average KL divergence a convenient measure of prediction quality to study.
%

Let the set $\mathcal P$ be the set of all time-series distributions over $\Omega$.
A distribution $\rho\in\mathcal P$ is {\em stationary}
if for every $i,j\in\N$ and every $A\in\X^j$, 
we have $$\rho(X_{1..j} =A)=\rho(X_{i..i+j-1} =A).$$
A stationary distribution $\rho$ is called {\em ergodic} if for all $n\in\N, A\in\X^n$ 
with probability~1 we have $\lim_{n\rightarrow\infty}\nu(X_{1..n},A) = \rho(A),$ 
where $\nu(X_{1..n},A)$ stands for the frequency of occurrence of the word $A$ in $X_{1..n}$.
(The latter definition can be shown to be equivalent to the usual one formulatedf in terms of shift-invariant sets \cite{Gray:88}.)

\section{Main result}
Denote  $\S\subset \mathcal P$  the set of all stationary ergodic time-series distributions.
 Define 
$$
\Sp:=\{\mu\in\mathcal P: \exists \nu\in\S\ d(\nu,\mu)=0\}.
$$
\begin{theorem}\label{th:main} 
 For any predictor $\rho\in\mathcal P$ there is a measure $\mu\in\Sp$ such that $d(\mu,\rho)\ge 1$.
\end{theorem}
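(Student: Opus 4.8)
The plan is to defeat the given predictor $\rho$ with a \emph{deterministic} target. I would look for a single infinite sequence $x^*\in\X^\infty$ and set $\mu:=\delta_{x^*}$, the measure concentrated on $x^*$. For such a $\mu$ one has $\mu(x^*_{1..n})=1$, so both divergences collapse to log-loss along the one sequence, e.g. $d(\mu,\rho)=\liminf_n\frac1n\sum_{t=1}^n\bigl(-\log\rho(x^*_t\mid x^*_{1..t-1})\bigr)$, and similarly with $\nu$ in place of $\rho$. The two things to arrange are then (i) $d(\mu,\rho)\ge1$, and (ii) $\mu\in\Sp$, i.e.\ a stationary ergodic $\nu$ that predicts $\mu$, $d(\mu,\nu)=0$.

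For (i) I would build $x^*$ greedily against $\rho$: at each step pick the symbol the predictor deems \emph{least} likely, $x^*_t\in\argmin_{a\in\X}\rho(x_t=a\mid x^*_{1..t-1})$. Since the alphabet is binary this probability is at most $1/2$, so (with logarithms to base $2$) every term $-\log\rho(x^*_t\mid x^*_{1..t-1})$ is at least $1$, giving $d(\mu,\rho)\ge1$ at once. The whole difficulty is (ii): this particular, $\rho$-dependent and possibly very complicated sequence $x^*$ must nevertheless be predicted by \emph{some} stationary ergodic measure. I would produce $\nu$ explicitly as a deterministic function of a countable-state Markov chain: take the chain on states $\N$ with transitions $i\to i+1$ with probability $q_i$ and $i\to1$ with probability $1-q_i$, and let the emitted symbol in state $i$ be $x^*_i$. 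From the stationary start one then has $\nu(x^*_{1..n})\ge\pi_1\prod_{i=1}^{n-1}q_i$, where $\pi_1$ is the stationary weight of state $1$, because the run of states $1,2,\dots,n$ reproduces exactly $x^*_{1..n}$.

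The crux is the choice of the $q_i$, and I expect it to be the main obstacle. The return time $T$ to state $1$ satisfies $P(T>n)=\prod_{i=1}^{n}q_i$, and two competing demands must be met. For the chain to be positive recurrent — so that a stationary distribution exists and the chain, hence its symbol factor $\nu$, is ergodic — I need $\E T=\sum_n P(T>n)<\infty$. But to get $d(\mu,\nu)=0$ I need $-\log\nu(x^*_{1..n})=o(n)$, i.e.\ $\prod_{i=1}^{n-1}q_i$ must decay only \emph{sub-exponentially}. Both hold with a polynomial tail: choosing the $q_i$ so that $\prod_{i=1}^{n}q_i\asymp n^{-2}$ (for instance $q_i=(i/(i+1))^2$) gives a finite mean return time yet $-\log\nu(x^*_{1..n})\le 2\log n+O(1)$, whence $d(\mu,\nu)=\liminf_n\frac1n(2\log n+O(1))=0$ and $\mu\in\Sp$. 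Beyond this balancing act, the remaining work is the bookkeeping: checking irreducibility, aperiodicity and positive recurrence of the chain, and passing ergodicity from the chain to $\nu$. It is worth noting that a \emph{geometric} tail — all a finite-state machine can produce — would force $-\log\nu(x^*_{1..n})=\Theta(n)$ and hence $d(\mu,\nu)>0$; this is exactly why the construction needs countably many states, matching the contrast with the positive finite-state result mentioned in the introduction.
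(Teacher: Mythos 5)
Your proposal is correct and follows essentially the same route as the paper: defeat $\rho$ on a greedily chosen adversarial sequence $x^*$ (so each conditional probability is at most $1/2$), then show $\delta_{x^*}\in\Sp$ by realizing a stationary ergodic predictor as a deterministic function of a countable-state Markov chain with transitions $q_i=(i/(i+1))^2$, exactly the paper's construction, including the key tension between positive recurrence (summable tail of the return time) and subexponential decay of $\prod_i q_i$. The paper's proof fills in the one piece you deferred to bookkeeping by a direct computation of the first-return probabilities $f_{11}^{(n)}=(1-q_n)\prod_{i=1}^{n-1}q_i$, showing $\sum_n n f_{11}^{(n)}<\infty$.
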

\begin{proof}
 We will show that the set $\Sp$ includes the set $\mathcal D$ of all Dirac measures, that is, 
of all measures concentrated on one deterministic sequence.  The statement of the theorem follows
directly from this, since for any $\rho$ one can find a sequence $x_1,\dots,x_n,\dots\in \X^\infty$
such that $\rho(x_n|x_{1..n-1}) \le 1/2$ for all $n\in\N$.

To show that $\mathcal D\subset\Sp$, we will construct, for any given sequence $x:=x_1,\dots,x_n,\dots\in \X^\infty$,
a measure $\mu_x$ such that $d(\delta_x,\mu_x)=0$ where $\delta_x$ is the Dirac measure concentrated on~$x$.
These measures are constructed as functions of a stationary  Markov chain with a countably infinite set of states.
The construction is based on the one used in \cite{BRyabko:88} (see also \cite{Gyorfi:98}).

The Markov chain $M$ has the set $\N$ of states. From each state $j$ it transits to the state $j+1$
with probability $p_j:={j^2}/{(j+1)^2}$ and to the state $1$ with the remaining probability, $1-p_j$.
Thus, $M$ spends most of the time around the state $1$, but takes rather long runs towards outer states: long, 
since $p_j$ tends to 1 rather fast. We need to show that it does not ``run away'' too much; more precisely, 
we need to show $M$ has a stationary distribution. For this, it is enough to show that the state $1$ is positive
recurrent (see, e.g., \cite[Chapter VIII]{Shiryaev:96} for the definitions and facts about Markov chains used
here). This can be verified directly as follows. Denote  $f_{11}^{(n)}$ the probability that starting from the state 
1 the chain returns to the state 1 for the first time in exactly $n$ steps.
 We have  
$$f_{11}^{(n)}=(1-p_n)\prod_{i=1}^{n-1}p_i=\left(1-\frac{n^2}{(n+1)^2}\right)\frac{1}{n^2}.$$
To show that the state 1 is positive recurrent we need $\left(\sum_{n=0}^\infty n f_{11}^{(n)}\right)^{-1}>0$.
Indeed, $n f_{11}^{(n)} 
<{3}/{n^2}$  which is summable.
It follows that $M$ has a stationary distribution, which we  call~$\pi$.

For   a given sequence $x:=x_1,\dots,x_n,\dots\in A^\infty$,
the measure $\mu_x$ is constructed as a function $g_x$ of the chain $M$ taken with 
its stationary distribution as the initial one. We define $g_x(j)=x_j$ for all $j\in\N$.
Since $M$ is stationary, so is~$\mu_x$. It remains to show that $d(\delta_x,\mu_x)=0$.
Indeed, we have 
\begin{multline*}
d_n(\delta_x,\mu_x)=-\log\mu_x(x_1,\dots,x_n)\le -\log\left(\pi_1\prod_{j=1}^np_j\right)
\\ = -\log\pi_1 + 2\log(n+1)=o(n).
\end{multline*}
\end{proof}
\subsection{Other sets of measures to predict and  tightness of the result}
One can ask how ``tight'' is the negative result presented, or, in other words, 
whether the set $\S$ was too general a point of departure in the first place. 

To answer this question, first note that, as mentioned before, the work \cite{BRyabko:88} shows (by an explicit construction) that there is a universal predictor for the set $\S$ (of stationary ergodic distributions) itself,
that is, there exist a measure $\rho$ such that $d(\mu,\rho)=0$ for any $\mu\in\S$.

Next, from  the proof of Theorem~\ref{th:main} one can see that it is possible to replace 
the set $\S$ in its  formulation with the set of all hidden  Markov chains with  a countably 
infinite set of states. The latter set is in fact much smaller than the set $\S$.
Indeed,  $\S$
can be considered as the set of all stationary hidden Markov processes with an uncountably infinite 
(specifically, $\X^\infty$) set of states, giving the ``much smaller'' comparison  above a precise set-theoretic meaning.

Passing to positive results for the problem of prediction considered, for the set $\mathcal M$
of all finite-memory processes, \cite[Theorem 15]{Ryabko:11pq4+} shows that there is a universal predictor 
for the set $\mathcal M^+$. Moreover, it is easy to extend the proof of the latter result
to all hidden Markov processes with finitely many states.
Thus, it is possible to predict all measures that are predicted by a hidden Markov chain 
with finitely many states, but not with a countably infinite set of states, making 
the negative result rather tight.

\subsection{Other measures of prediction quality}
So far, we have been measuring the quality of prediction in terms of expected average
KL divergence. Measuring it differently would change both the set $\Sp$ and the requirement
on the predictor that would have to predict all measures from this set. Thus, the result 
of Theorem~\ref{th:main} does not directly entail a similar statement about neither weaker nor 
stronger measures of prediction quality. 

However, a quick look at the proof of Theorem~\ref{th:main} shows that the construction 
it employs is rather universal.
 Specifically,   $\mu_x$ predicts 
$x$ also almost surely rather than in expectation (simply because the sequence $x$ is a deterministic sequence).
\newpage
Moreover, the prediction error (of $\mu_x$ on $x$) convergence to 0 in just about any sense 
one can think of, for example, one can replace KL divergence with the absolute loss, squared loss, etc.
This implies that Theorem~\ref{th:main} holds for these measures of prediction quality as well.
Furthermore, this shows that the same result holds if we consider different notions of prediction on different sides of the question:
asking whether it is possible to predict in (say) expected average KL divergence all measures that are predicted
by some stationary ergodic measure when (say) the convergence has to be with probability 1  and there is no time-averaging. 

Thus, the result (placed in the title) appears to be general and not an artefact of the measure of  prediction quality considered.

\subsection*{Acknowledgments}
Daniil Ryabko acknowledges the  support of the French Ministry of Higher Education and Research,  the Nord-Pas-de-Calais Regional Council and FEDER.
Boris Ryabko acknowledges the  support of  the Russian Foundation for Basic Research, project no. 15-07-01851

\end{document}